\newtheorem{theorem}{Theorem}
\newenvironment{proof}{\par\noindent {\em Proof.}\ } {\hfill$\Box$\par\medskip}
\def\ojoin{\setbox0=\hbox{$\Join$}%
  \rule[.07ex]{.3em}{.4pt}\llap{\rule[1.38ex]{.3em}{.4pt}}}
\def\leftouterjoin{\mathbin{\ojoin\mkern-6.5mu\Join}}
\newcommand{\IRI}{\ensuremath{\mathbf{I}}\xspace}
\newcommand{\Var}{\ensuremath{\mathbf{X}}\xspace}
\newcommand{\Graph}{G}
\newcommand{\OPT}{\ensuremath{\mathbin{\mathsf{OPT}}}}
\newcommand{\map}{\mu}
\newcommand{\Dom}[1]{\ensuremath{\mathsf{dom}(#1)}}
\newcommand{\vars}[1]{\ensuremath{\mathsf{vars}(#1)}}
\newcommand{\triple}[3]{\ensuremath{(#1, #2, #3)}}
\newcommand{\mset}{\Omega}
\newcommand{\LOJ}{\mathbin{\leftouterjoin}}
\newcommand{\Sem}[2]{\llbracket #2 \rrbracket_{#1}}
\newcommand{\pspace}{{\rm \textsc{PSpace}}\xspace}
\newcommand{\np}{{\rm \textsc{NP}}\xspace}
\newcommand{\conp}{{\rm \textsc{coNP}}\xspace}
\newcommand{\pitwop}{\ensuremath{\Pi_2^p}}
 \newcommand{\nat}{\mathbb N}
 \newcommand{\Tinst}{\mathbb T}
 \newcommand{\ttypes}{T}
 \newcommand{\tiletype}{t}
 \newcommand{\ttype}{tType}
 \newcommand{\Hrel}{\mathcal H}
 \newcommand{\Vrel}{\mathcal V}
 \newtheorem{fact}{Fact}
 \newcommand{\btype}{\mathit{bType}}
 \newcommand{\ctype}{\mathit{cType}}
 \newcommand{\cellvar}{?\mathit{c}}
 \newcommand{\cell}{\mathit{Cell}}
 \newcommand{\initvar}{?\mathit{r}}
 \newcommand{\itype}{\mathit{hType}}
 \newcommand{\initrow}{\mathit{inInitRow}}
 \newcommand{\hnext}{\mathit{hNext}}
 \newcommand{\vnext}{\mathit{vNext}}
 \newcommand{\basevar}{?\mathit{b}}
 \newcommand{\baseconstA}{\mathit{b_{\sqsubseteq}}}
 \newcommand{\baseconstB}{\mathit{b_{\not\sqsubseteq}}}
 \newcommand{\subsA}{\mathit{Base_{\sqsubseteq}}}
 \newcommand{\subsB}{\mathit{Base_{\not\sqsubseteq}}}
 \newcommand{\sqvarone}{?\mathit{s}_1}
 \newcommand{\sqvartwo}{?\mathit{s}_2}
 \newcommand{\sqvarthree}{?\mathit{s}_3}
 \newcommand{\sqvarfour}{?\mathit{s}_4}
 \newcommand{\Broot}{B_{\text{root}}}
 \newcommand{\BnvH}[1]{B^{#1}_{\text{h-incompat}}}
 \newcommand{\BnvV}[1]{B^{#1}_{\text{v-incompat}}}
 \newcommand{\Btiling}[1]{B^{#1}_{\text{tiling}}}
 \newcommand{\Bbase}{B_{\text{base}}}
 \newcommand{\tilevarone}{?\mathit{tile}_1}
 \newcommand{\tilevartwo}{?\mathit{tile}_2}
\begin{document}

\title{Subsumption of Weakly Well-Designed SPARQL Patterns is Undecidable}
\author{Mark Kaminski \\ \small{Department of Computer Science,} \\ \small{University of Oxford,} \\ \small{Oxford, UK} \\ \small{mark.kaminski@cs.ox.ac.uk} \and Egor V.~Kostylev \\ \small{Department of Computer Science,} \\ \small{University of Oxford,} \\ \small{Oxford, UK} \\ \small{egor.kostylev@cs.ox.ac.uk}}

\maketitle

The Resource Description Framework
(RDF)~\cite{RDFSyntax14,RDFSemantics14}
is the W3C standard for representing linked data on the Web.  
SPARQL~\cite{SPARQL10,SPARQL11} is the default query language for RDF
graphs.

A distinctive feature of SPARQL is
the $\mathsf{OPTIONAL}$ operator (abbreviated as $\OPT$ in this paper), which was introduced to
\textsl{``not reject (solutions) because some part of the query
  pattern does not match''}~\cite{SPARQL10}.
The $\OPT$ operator accounts in a natural way for
the open world assumption and the fundamental incompleteness of the
Web. However, evaluating queries that use \OPT{} is computationally
expensive: the corresponding decision problem is $\pspace$-complete~\cite{PerezEtAl09,SchmidtML10}, even if only projection-free queries (i.e., \emph{patterns}) are considered.

P\'erez et al.~\cite{PerezEtAl09} introduced the \emph{well-designed}
fragment of SPARQL queries by imposing a syntactic restriction on the
use of variables in $\OPT$-expressions. On the one hand, well-designed patterns have
lower complexity of query evaluation---the problem is \conp-complete.  On the other hand, such queries
have a more intuitive behaviour than arbitrary SPARQL queries and enjoy
specific monotonicity properties. However, by far not all SPARQL queries are well-designed~\cite{PicalausaVansummeren11}.
\emph{Weakly well-designed} SPARQL fragment has been recently introduced to overcome this shortcoming: it possesses the same complexity of evaluation, but also includes almost all queries that appear in practice~\cite{DBLP:conf/icdt/KaminskiK16,journals/tocs/KaminskiK18}.

Besides evaluation, every query language has associated static analysis problems, such as \emph{query equivalence} and \emph{containment}.
For SPARQL there is also a specific static analysis problem, namely, \emph{query subsumption}~\cite{LetelierPPS13}. 
It is known that equivalence and containment are both $\np$-complete for well-designed patterns, while subsumption is \pitwop-complete for such queries~\cite{LetelierPPS13,PichlerS14}. 
Moreover, all three problems are undecidable for well-designed queries
with projection~\cite{LetelierPPS13,PichlerS14}.
From the results of Zhang et al.~\cite{ZhangBP16} it follows that all these problems are undecidable for arbitrary patterns.
Finally, equivalence and containment for weakly well-designed patterns
are both $\pitwop$-complete~\cite{DBLP:conf/icdt/KaminskiK16,journals/tocs/KaminskiK18}.
It is also claimed that subsumption is also $\pitwop$-complete for such patterns~\cite{DBLP:conf/icdt/KaminskiK16}. In this paper, however, we show that this problem is much more difficult; in fact, it is undecidable.

\section{SPARQL Patterns}
\label{sec:def}
We adopt the formalisation of SPARQL that mostly follows~\cite{PerezEtAl09}. However, we concentrate on patterns constructed using only basic graph patterns and optional matching.

\smallskip\noindent\textbf{RDF Graphs}~~
An RDF graph is a labelled graph where nodes can also serve as edge
labels. Formally, let $\IRI$ be a set of \emph{IRIs}. Then an
\emph{RDF triple} is a tuple $(s, p, o)$ from
$\IRI\times\IRI\times\IRI$, where $s$ is called \emph{subject}, $p$
\emph{predicate}, and $o$ \emph{object}. An \emph{RDF graph} is a
finite set of RDF triples.

\smallskip\noindent\textbf{SPARQL Syntax}~~
Let $\Var$ be an infinite
set $\{?x, ?y, \ldots\}$ of \emph{variables}, disjoint from $\IRI$.
A \emph{basic (graph) pattern} is a possibly empty set of triples from 
$$
{(\IRI \cup \Var)\times(\IRI\cup \Var)\times(\IRI \cup \Var)}.
$$
An (\emph{optional SPARQL graph) patterns} $P$ are defined by the following grammar, where $B$ ranges over basic patterns:
\begin{align*}
  P & \;\; ::=  \;\; B \mid (P \OPT P).
\end{align*}
We denote $\vars{P}$ the set of all variables that appear in a pattern $P$.

Note that a given pattern can occur more than once within a larger
pattern.  In what follows we will need to distinguish
between a (sub-)pattern $P$ as a possibly repeated building block of
another pattern $P'$ and its \emph{occurrences} in $P'$---that is,
unique subtrees in the parse tree. Then, the \emph{left (right)
  argument} of an occurrence $i$ is the subtree rooted in the left
(right) child of the root of $i$ in the parse tree, and an occurrence
$i$ is \emph{inside} an occurrence $j$ if the root of $i$ is a descendant of the root of $j$.

 A pattern $P$ is \emph{well-designed} (P{\'e}rez et al.~\cite{PerezEtAl09})
  if for every occurrence $i$ of an
  $\OPT$-pattern $P_1 \OPT P_2$ in $P$ the variables from $\vars{P_2}
  \setminus \vars{P_1}$ occur in $P$
  only inside $i$. 

Given a pattern $P$, an
occurrence $i_1$ in $P$ \emph{dominates} an occurrence $i_2$ if
there exists an occurrence $j$ of an $\OPT$-pattern such that $i_1$ is
inside the left argument of $j$ and $i_2$ is inside the right
argument. A pattern $P$ is \emph{weakly well-designed}~(\cite{DBLP:conf/icdt/KaminskiK16,journals/tocs/KaminskiK18}) if, for each occurrence $i$ of an
$\OPT$-subpattern $P_1 \OPT P_2$, the variables in $\vars{P_2} \setminus
\vars{P_1}$ appear outside $i$ only in subpatterns whose occurrences are dominated by $i$. 

\smallskip\noindent\textbf{SPARQL Semantics}~~
The semantics of graph patterns is defined in terms of
\emph{mappings}---that is, partial functions from variables to
IRIs. The \emph{domain} $\Dom{\map}$ of a mapping $\map$ is the set of
variables on which $\map$ is defined.  Two mappings $\map_1$ and
$\map_2$ are \emph{compatible}, written $\map_1 \sim \map_2$, if
$\map_1(?x) = \map_2(?x)$ for all variables $?x\in\Dom{\map_1}\cap\Dom{\map_2}$. 
Mapping $\mu_1$ \emph{is subsumed} by mapping $\mu_2$, written $\mu_1  \sqsubseteq \mu_2$, if $\mu_1 \sim \mu_2$ and $\Dom{\map}_1 \subseteq \Dom{\map_2}$.
If $\map_1\sim\map_2$, then $\map_1\cup\map_2$
constitutes a mapping with domain $\Dom{\map_1}\cup\Dom{\map_2}$ that coincides with $\map_1$ on $\Dom{\map_1}$
and with $\map_2$ on $\Dom{\map_2}$.

Given two sets of mappings $\mset_1$ and $\mset_2$, we define their \emph{left outer join} operation as follows:
\begin{multline*}
\mset_1 \LOJ \mset_2 = \{\map_1 \cup \map_2 \mid \map_1 \in \mset_1, \map_2 \in \mset_2, \text{ and } \map_1 \sim \map_2\} \cup {} \\ \{\map_1 \mid \map_1 \in \mset_1,  \map_1 \not\sim \map_2 \text{ for all }\map_2 \in \mset_2\}.
\end{multline*}
Given a  graph $\Graph$, the \emph{evaluation} $\Sem{G}{P}$ of a pattern $P$ over $G$ is defined as follows:
\begin{enumerate}[noitemsep,topsep=2pt]
\item if $B$ is a basic pattern, then 
$
\Sem{G}{B}=\{\map : \vars{B}\rightarrow\IRI \mid \map(B) \subseteq G\};
$
\item $\Sem{G}{(P_1\OPT P_2)}=\Sem{G}{P_1}\LOJ\Sem{G}{P_2}$.
\end{enumerate}

A pattern $P$ is \emph{contained} in a pattern $P'$ if $\Sem{G}{P} \subseteq \Sem{G}{P'}$ for every graph $G$. Patterns $P$ and $P'$ are \emph{equivalent} if they contain each other. 
Pattern $P$ is \emph{subsumed} by $P'$, written $P \sqsubseteq P'$, if, for every graph~$G$, each $\mu \in \Sem{G}{P}$
has $\mu' \in \Sem{G}{P'}$ such that $\mu \sqsubseteq \mu'$ (Letelier~et~al.~\cite{LetelierPPS13}).

\section{Pattern Subsumption}

 \begin{theorem}
 \label{lem:sub_Und}
  The problem of checking whether $P \sqsubseteq P'$ for weakly well-designed patterns $P$ and $P'$ is undecidable.
 \end{theorem}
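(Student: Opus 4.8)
The goal is to reduce an undecidable problem — the natural candidate is the (non-)tiling problem for $\mathbb{N}\times\mathbb{N}$, which is $\Pi^0_1$-hard: given a finite set of tile types with horizontal and vertical compatibility relations, decide whether they tile the first quadrant. I will encode a tiling candidate as an RDF graph and build two weakly well-designed patterns $P$ and $P'$ so that $P \sqsubseteq P'$ holds if and only if the tile set does \emph{not} admit a valid tiling. The macros already declared in the preamble (\ttypes, \Hrel, \Vrel, \cellvar, \initvar, \hnext, \vnext, \baseconstA, \baseconstB, \subsA, \subsB, \Broot, $\BnvH{\cdot}$, $\BnvV{\cdot}$, $\Btiling{\cdot}$, \tilevarone, \tilevartwo, etc.) strongly suggest this is the intended route, with a grid of $\cell$ nodes linked by $\hnext$ and $\vnext$ edges, an $\initrow$ marker, and designated ``base'' constants $\baseconstA, \baseconstB$ used to signal success or failure of the tiling check.

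**Key steps, in order.** First I would fix the grid-description vocabulary and say precisely which graphs count as (encodings of) tiling candidates: a graph must contain an initial cell, chains of $\hnext$ and $\vnext$ edges, a tile-type assignment to each cell, and enough redundant structure that the correctness of the candidate can be \emph{probed locally} by a pattern. Second, I would design the ``error-detecting'' gadget: an $\OPT$-pattern that, on a given candidate graph, produces a mapping binding the probe variable \tilevarone\ (or \tilevartwo) to $\baseconstA$ when some adjacent pair of cells violates $\Hrel$ or $\Vrel$, and to $\baseconstB$ otherwise; here the left-outer-join semantics is used in the characteristic way — the mandatory side matches always, and the optional side matches exactly when a violation witness is present. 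Third, I would assemble $P$ as a nested cascade of such $\OPT$-gadgets (one conceptual ``layer'' per kind of local constraint — root well-formedness via \Broot, horizontal incompatibility via $\BnvH{\cdot}$, vertical incompatibility via $\BnvV{\cdot}$, and the tile-placement check via $\Btiling{\cdot}$), arranged so that the nesting respects the \emph{domination} condition and the fresh variables of each right argument occur elsewhere only in dominated positions — this is what makes $P$ and $P'$ \emph{weakly} well-designed rather than merely arbitrary. Fourth, I would let $P'$ be a pattern whose solutions are exactly the ``failure-reporting'' mappings (those touching $\baseconstB$, i.e. witnessing that the candidate is \emph{not} a valid tiling), and then argue the equivalence: $P \sqsubseteq P'$ iff on every graph every solution of $P$ extends to a failure report, iff no graph encodes a correct tiling of an arbitrarily large grid, iff the tile set does not tile $\mathbb{N}\times\mathbb{N}$ (using König's lemma / compactness for the ``arbitrarily large finite grids'' $\Leftrightarrow$ ``infinite grid'' direction).

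**Main obstacle.** The delicate part is not the tiling encoding itself but keeping \emph{both} patterns inside the weakly well-designed fragment while still having enough expressive power to (i) force the graph to describe an arbitrarily large consistent grid and (ii) detect a single local violation anywhere in it. Weak well-designedness forbids the kind of free reuse of optional variables that an unrestricted reduction would exploit, so the gadgets must be layered so that each newly introduced variable is ``consumed'' only within its own $\OPT$-occurrence or in occurrences \emph{dominated} by it; verifying the domination bookkeeping for the full nested pattern — and checking that the subsumption direction still goes through under the left-outer-join's ``or keep $\mu_1$ alone'' branch, which can silently drop witnesses — is where the real care is needed. A secondary subtlety is ruling out ``cheating'' graphs that are not honest grid encodings: the construction must ensure that any such malformed graph still yields a $P$-solution that is subsumed by a $P'$-solution, so malformed inputs never produce a spurious counterexample to subsumption.
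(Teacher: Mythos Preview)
Your overall shape---a tiling reduction with a grid vocabulary, an $\OPT$-cascade built from $\Broot$, the $\BnvH{\cdot}$/$\BnvV{\cdot}$ incompatibility probes, the $\Btiling{\cdot}$ blocks, and the two base constants $\baseconstA,\baseconstB$ to flag ``subsumed'' versus ``not subsumed''---matches the paper. Two points, however, diverge from the paper, and the second is a genuine gap.

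\smallskip
\noindent\textbf{Roles of $P$ and $P'$.} In the paper the complex $\OPT$-cascade is $P'$, while $P$ is a \emph{basic} pattern (six triples, one variable $\basevar$), hence trivially well-designed. You put the cascade on the $P$ side. This is not fatal in itself, but it makes the weak-well-designedness bookkeeping harder than necessary and is the opposite of what the macros suggest.

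\smallskip
\noindent\textbf{The compactness step does not close.} Your plan is a \emph{direct} reduction: $P \sqsubseteq P'$ iff the instance does not tile $\mathbb N\times\mathbb N$, with K\"onig's lemma bridging ``arbitrarily large finite grids'' and ``the infinite quadrant''. The problematic direction is producing, from an \emph{arbitrary} tiling of $\mathbb N\times\mathbb N$, a \emph{finite} RDF graph $G$ witnessing $P\not\sqsubseteq P'$. Any finite prefix of an infinite tiling has boundary cells lacking $\hnext$- or $\vnext$-successors; your own $\Btiling{\cdot}$ blocks (which must demand a successor in order to propagate the grid) will then fail on those boundary cells, and the $\OPT$-cascade falls through to $\Bbase$, binding $\basevar$ to the ``subsumed'' constant. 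So a finite prefix does \emph{not} give a counterexample. The only finite graphs that avoid boundaries are tori, which exist exactly when the instance admits a \emph{periodic} tiling---and there are instances that tile $\mathbb N\times\mathbb N$ but have no periodic tiling. Hence the iff you aim for is false as stated, and K\"onig's lemma (which goes from ``all finite grids tileable'' to ``quadrant tileable'') does not help here.

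The paper handles precisely this obstacle by \emph{not} attempting a tight reduction. It invokes the Gurevich--Koryakov theorem that the set $S_{\textrm{period}}$ of instances with a periodic tiling and the set $S_{\textrm{tiling}}$ of instances with any tiling are recursively inseparable, and then shows only the two one-sided implications
\[
\Tinst\in S_{\textrm{period}}\ \Longrightarrow\ P_\Tinst\not\sqsubseteq P'_\Tinst
\qquad\text{and}\qquad
P_\Tinst\not\sqsubseteq P'_\Tinst\ \Longrightarrow\ \Tinst\in S_{\textrm{tiling}}.
\]
The first uses a finite torus graph built from a periodic tiling (no boundary problem); the second extracts an infinite tiling from any finite witness graph by iterating the $\Btiling{\cdot}$ extensions. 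Together they sandwich the non-subsumption set between $S_{\textrm{period}}$ and $S_{\textrm{tiling}}$, which is enough for undecidability. Your proposal is missing this inseparability idea; without it, the reduction as sketched does not go through.
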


 \begin{proof}
 We prove undecidability by a reduction of a variant of the tiling problem, which is known to be undecidable (see e.g., \cite{GK72}). We start by introducing the notation used throughout the proof.

 A \emph{tiling instance} $\Tinst$ consists of a collection $\ttypes = \{\tiletype_1,\ldots ,\tiletype_n\}$
 of {\em tile types} and {\em edge compatibility relations} $\Hrel$ and~$\Vrel$ on $\ttypes$. 
 Intuitively, $\Hrel(\tiletype, \tiletype')$ means that a tile of type $\tiletype'$ can be placed to the right of
 a tile of type $\tiletype$ in a row, while $\Vrel(\tiletype, \tiletype')$ means that $\tiletype'$ can be 
 placed above $\tiletype$ in a column.

 A \emph{tiling} of the positive plane with $\Tinst$ is a function
 $\tau : \nat \times \nat \rightarrow \ttypes$, for the set of natural numbers $\nat$, such that, for all $i,j \in \nat$,
 \begin{itemize}[noitemsep,topsep=2pt]
 \item[--] $\Hrel(\tau(i,j),\tau(i+1,j))$, and
 \item[--] $ \Vrel(\tau(i,j), \tau(i,j+1))$.
 \end{itemize}
 Tiling $\tau$ is \emph{periodic} if there exist positive numbers $p$ and $q$, called \emph{horizontal} and \emph{vertical periods}, respectively, such that $\tau(i,j) = \tau(p+i,j) = \tau(i,q + j)$ for all $i,j \in \nat$. A periodic tiling can be seen as a tiling of a torus, since column $p+1$ and row $q+1$ can be ``glued'' with the left-most column and bottom row, respectively.

 Let $S_{\textrm{tiling}}$ denote the set of all tiling instances that allow for tilings of the positive plane, and $S_{\textrm{period}}$ the set of all tiling instances that allow for periodic tilings.
 To prove undecidability we will use the following fact.

 \begin{fact}[Gurevich and Koryakov~\cite{GK72}]
 \label{f:tile}
 Sets $S_{\textrm{\em tiling}}$ and $S_{\textrm{\em period}}$ are \emph{recursively inseparable}---that is, there is no recursive set $S$ with 
 ${S_{\textrm{\em period}} \subseteq S \subseteq S_{\textrm{\em tiling}}}$.
 \end{fact}

 In what follows we first construct, for each tiling instance $\Tinst$, weakly well-designed patterns $P_\Tinst$ and
 $P'_\Tinst$, and then show that the set
 \begin{equation*}
 \label{eq_phi}
 \Phi = \{(P_\Tinst, P'_\Tinst) \mid P_\Tinst \not\sqsubseteq P'_\Tinst\}
 \end{equation*} 
 contains $\{(P_\Tinst, P'_\Tinst) \mid \Tinst \in S_{\textrm{period}}\}$, 
 and is contained in ${\{(P_\Tinst, P'_\Tinst) \mid \Tinst \in S_{\textrm{tiling}}\}}$. This will imply, by Fact~\ref{f:tile}, that $\Phi$ (and, hence, the complement of $\Phi$) cannot be recursive.

 \medskip

 Let $\Tinst$ be a tiling instance with tile types $\ttypes = \{\tiletype_1,\ldots ,\tiletype_n\}$, and compatibility relations $\Hrel$ and~$\Vrel$.
 Let $P_\Tinst$ be
 $$
 \begin{array}{llllll}
  &  & \{ & \triple{c_{11}}{\itype}{\initrow}, \triple{c_{11}}{\ctype}{\cell},  \\
 & & & \triple{c_{11}}{\hnext}{c_{12}}, \triple{c_{11}}{\vnext}{c_{21}}, \triple{c_{12}}{\vnext}{c_{22}}, \\
 & & & \triple{\basevar}{\btype}{\subsA}~~~ \};
 \end{array}
 $$
 so, $P_\Tinst$ is a basic pattern with 6 triples, only one of which mentions a variable, $\basevar$.
 The other pattern has a more complex structure: let $P'_\Tinst$ be
 \begin{equation}
 \begin{array}{lllllllll}
  \multicolumn{7}{l}{( \cdots (( \cdots (( \cdots ( \Broot \OPT {}} \\
   & \qquad \BnvH{1}) & \OPT & \cdots & \OPT & \BnvH{\ell}) & \OPT \\
   & \qquad ~ \BnvV{1}) & ~ \OPT & ~ \cdots & ~ \OPT & ~ \BnvV{m}) & ~ \OPT \\
   & \qquad ~~ \Btiling{1}) &  ~~ \OPT & ~~ \cdots & ~~ \OPT & ~~ \Btiling{n}) & ~~ \OPT \\
   & \qquad ~~~ \Bbase,
 \end{array}
 \label{Pprime}
 \end{equation}
 where $\ell = |(\ttypes \times \ttypes) \setminus \Hrel|$, $m = |(\ttypes \times \ttypes) \setminus \Vrel|$,
 $$
 \begin{array}{rlll}
 \Broot & = & \{ \, \triple{\initvar}{\itype}{\initrow}, \\
 & & ~~ \triple{\cellvar}{\ctype}{\cell}, \\
 & & ~~ \triple{\sqvarone}{\hnext}{\sqvartwo}, \triple{\sqvarone}{\vnext}{\sqvarthree}, \triple{\sqvartwo}{\vnext}{\sqvarfour} \, \}, \\ \\

 \BnvH{i} & = & \{ \, \triple{\basevar}{\btype}{\subsA}, \\
 & & ~~ \triple{\tilevarone}{\hnext}{\tilevartwo}, \triple{\tilevarone}{\ttype}{\tiletype_1^i},\triple{\tilevartwo}{\ttype}{\tiletype_2^i} \, \}, \\
 & & ~ \qquad \qquad \text{ for each } i = 1, \ldots, \ell, \\
 & & ~ \qquad \qquad \text{ where } (\tiletype_1^i, \tiletype_2^i) \text{ is the } i\text{'th pair in } (\ttypes \times \ttypes) \setminus \Hrel, \\ \\

 \BnvV{j} & = & \{ \, \triple{\basevar}{\btype}{\subsA}, \\
 & & ~~ \triple{\tilevarone}{\vnext}{\tilevartwo}, \triple{\tilevarone}{\ttype}{\tiletype_1^j},\triple{\tilevartwo}{\ttype}{\tiletype_2^j} \, \}, \\ 
 & & ~ \qquad \qquad \text{ for each } j = 1, \ldots, m, \\
 & & ~ \qquad \qquad \text{ where } (\tiletype_1^j, \tiletype_2^j) \text{ is the } j\text{'th pair in } (\ttypes \times \ttypes) \setminus \Vrel, \\ \\

 \Btiling{k} & = & \{ \, \triple{\basevar}{\btype}{\subsB}, \\
 & & ~~  \triple{\initvar}{\ctype}{\cell}, \triple{\initvar}{\hnext}{\initvar'}, \triple{\initvar'}{\itype}{\initrow}, \\
 & & ~~  \triple{\cellvar}{\ttype}{\tiletype_k}, \triple{\cellvar}{\vnext}{\cellvar'}, \triple{\cellvar'}{\ctype}{\cell}, \\
 & & ~~  \triple{\sqvarthree}{\hnext}{\sqvarfour} \, \}, \\ 
 & & ~ \qquad \qquad \text{ for each } k = 1, \ldots, n, \\ \\

 \Bbase & = & \{ \, \triple{\basevar}{\btype}{\subsA} \, \}.
 \end{array}
 $$

 Having the construction complete, next we show that $P_\Tinst \not\sqsubseteq P'_\Tinst$ for any tiling instance $\Tinst$ in $S_{\textrm{period}}$. In particular, on the base of a witnessing periodic tiling we build a graph $\Graph$ and a mapping $\map$ such that $\map \in \Sem{\Graph}{P_\Tinst}$, but there is no $\map' \in \Sem{\Graph}{P'_\Tinst}$ such that $\map \sqsubseteq \map'$. Assume that  $\Tinst$ has tile types $\ttypes = \{\tiletype_1,\ldots ,\tiletype_n\}$, compatibility relations $\Hrel$ and~$\Vrel$, and periodic tiling $\tau$ with the horizontal and vertical periods $p \geq 2$ and $q \geq 2$, respectively.
 Let $\Graph$ consist of the triples
 $$
 \triple{\baseconstA}{\ttype}{\subsA}, \triple{\baseconstB}{\ttype}{\subsB},
 $$
 as well as the triples
 $$
 \begin{array}{ll}
 \triple{c_{1j}}{\itype}{\initrow}, & \text{ for each } j = 1, \ldots, q, \\
 \triple{c_{ij}}{\ctype}{\cell}, & \text{ for each } i = 1, \ldots, p \text{ and } j = 1, \ldots, q, \\
 \triple{c_{ij}}{\ttype}{\tau(i,j)}, & \text{ for each } i = 1, \ldots, p \text{ and } j = 1, \ldots, q, \\
 \triple{c_{ij}}{\hnext}{c_{i(j+1)}}, & \text{ for each } i = 1, \ldots, p \text{ and } j = 1, \ldots, q - 1, \\
 \triple{c_{iq}}{\hnext}{c_{i1}}, & \text{ for each } i = 1, \ldots, p, \\
 \triple{c_{ij}}{\vnext}{c_{(i+1)j}}, & \text{ for each } i = 1, \ldots, p - 1 \text{ and } j = 1, \ldots, q, \\
 \triple{c_{pj}}{\vnext}{c_{1j}}, & \text{ for each } j = 1, \ldots, q.
 \end{array}
 $$
 Let also $\map = \{\basevar \mapsto \baseconstA\}$.

 It is immediate to see that $\map \in \Sem{\Graph}{P_\Tinst}$. Moreover, assuming that $P_\Tinst$ has form \eqref{Pprime}, $\Sem{\Graph}{\Broot}$ consists of $q \cdot (p \cdot q) \cdot (p \cdot q)$ mappings sending $\initvar$ to one of $c_{1j}$, $\cellvar$ to one of $c_{ij}$, $\sqvarone$ also to one of $c_{ij}$, while $\sqvartwo$, $\sqvarthree$ and $\sqvarfour$ to the IRIs accordingly connected to the value of $\cellvar$ (note that the values of $\initvar$, $\cellvar$, and $\sqvarone$ do not depend on each other).

 Since the tiling agrees with $\Hrel$ and $\Vrel$, none of basic patterns $\BnvH{i}$ and $\BnvV{j}$ has a match in $G$, because each of them requires a pair of horizontally or vertically adjacent cells with incompatible tile types. So, none of the mappings in $\Sem{\Graph}{\Broot}$ are extendable to any of $\BnvH{i}$ and $\BnvV{j}$.
 However, each mapping $\map'_\text{root} \in \Sem{\Graph}{\Broot}$ extends to $\Btiling{k}$ such that $t_k = \tau(i, j)$ with $\map'_\text{root}(\cellvar) = c_{ij}$. In particular, this extension $\map'$ sends $\basevar$ to $\baseconstB$, which implies that $\map \not \sqsubseteq \map'$. Therefore, $\Graph$ and $\map$ are a witness for the required $P_\Tinst \not\sqsubseteq P'_\Tinst$.

 \smallskip

 We continue by showing that $P_\Tinst \not\sqsubseteq P'_\Tinst$ implies $\Tinst \in S_{\textrm{tiling}}$ for any tiling instance~$\Tinst$. In particular, on the base of a graph $\Graph$ and mapping $\map$ witnessing $P_\Tinst \not\sqsubseteq P'_\Tinst$ we construct a tiling $\tau$ of the positive plane with $\Tinst$. Assume that  $\Tinst$ has tile types $\ttypes = \{\tiletype_1,\ldots ,\tiletype_n\}$ as well as compatibility relations $\Hrel$ and~$\Vrel$. Since $\map \in \Sem{\Graph}{P_\Tinst}$, graph $\Graph$ contains triples
 $$
 \begin{array}{llllll}
 \triple{c_{11}}{\itype}{\initrow},\triple{c_{11}}{\ctype}{\cell},  \\
 \triple{c_{11}}{\hnext}{c_{12}}, \triple{c_{11}}{\vnext}{c_{21}}, \triple{c_{12}}{\vnext}{c_{22}}, \\
 \triple{\baseconstA}{\btype}{\subsA}
 \end{array}
 $$
 for the IRI $\baseconstA$ such that $\map = \{\basevar \mapsto \baseconstA\}$. Therefore, assuming that $P'_\Tinst$ has form \eqref{Pprime}, $\Sem{\Graph}{\Broot}$ contains a mapping $\map'_\text{root}$ sending $\initvar$ to $c_{11}$.
 Mapping $\map'_\text{root}$ is extendable to $\Btiling{k}$ for some $k$; indeed, if it is not the case, then $\Sem{\Graph}{P'_{\Tinst}}$ contains an extension $\map'$ of $\map'_\text{root}$ sending $\basevar$ to $\baseconstA$, because all $\BnvH{i}$, $\BnvV{j}$, and $\Bbase$ contain $\triple{\basevar}{\ttype}{\subsA}$, while $\Bbase$ matches $\Graph$, which implies $\map \sqsubseteq \map'$ contradicting the fact that $\Graph$ and $\map$ are a witness for non-subsumption. Therefore, triples $\triple{\initvar}{\ctype}{\cell}, \triple{\initvar}{\hnext}{\initvar'}, \triple{\initvar'}{\itype}{\initrow}$ are matched in $\Graph$ extending $\map'_\text{root}$, that is, $\Graph$ contains triples
 $$
 \triple{c_{11}}{\ctype}{\cell}, \triple{c_{11}}{\hnext}{c'_{12}}, \triple{c'_{12}}{\itype}{\initrow}
 $$
 for some IRI $c'_{12}$. Just for uniformity, assume that $c'_{12} = c_{12}$. 
 Therefore, $\Sem{\Graph}{\Broot}$ contains a mapping $\map''_\text{root}$ sending $\initvar$ to $c_{12}$ (and all other variables same as $\map'_\text{root}$). Reasoning in the same way as for $\map'_\text{root}$, we obtain that $\Graph$ has triples
 $$
 \triple{c_{12}}{\ctype}{\cell}, \triple{c_{12}}{\hnext}{c_{13}}, \triple{c_{13}}{\itype}{\initrow}
 $$
 for some IRI $c_{13}$. Continuing like this, we conclude that $\Graph$ contains
 $$
 \triple{c_{1j}}{\ctype}{\cell}, \triple{c_{1(j+1)}}{\hnext}{c_{1(j+1)}}
 $$
 for all $j \geq 1$ (note that many of these $c_{1j}$ coincide, because $\Graph$ is finite).

 For each $j \geq 1$, $\Sem{\Graph}{\Broot}$ contains a mapping sending $\cellvar$ to $c_{1j}$. As before, this mapping is extendable in $\Graph$ to $\Btiling{k}$ for some $k$. In particular, it is extendable to the triples $\triple{\cellvar}{\ttype}{\tiletype_k}$, $\triple{\cellvar}{\vnext}{\cellvar'}$, and $\triple{\cellvar'}{\ctype}{\cell}$---that is, $\Graph$ contains triples
 $$
 \triple{c_{1j}}{\ttype}{\tiletype_k}, \triple{c_{1j}}{\vnext}{c_{2j}}, \triple{c_{2j}}{\ctype}{\cell}
 $$
 for some IRI $c_{2j}$ (again, if $j$ is 1 or 2, then we assume that $c_{2j}$ is the same as in $P_\Tinst$ for uniformity). Similarly as before, $\Sem{\Graph}{\Broot}$ contains a mapping sending $\cellvar$ to $c_{2j}$, from which we have that $\Graph$ has triples 
 $$
 \triple{c_{2j}}{\ttype}{\tiletype_k}, \triple{c_{2j}}{\vnext}{c_{3j}}, \triple{c_{3j}}{\ctype}{\cell}
 $$
 for some $c_{3j}$ and $k$. Repeating this process, we conclude that $\Graph$ contains, for any $i \geq 1$ and $j \geq 1$,
 $$
 \triple{c_{ij}}{\ttype}{\tiletype_{ij}}, \triple{c_{ij}}{\vnext}{c_{(i+1)j}}, \triple{c_{(i+1)j}}{\ctype}{\cell}
 $$
 for some $c_{ij}$ and $\tiletype_{ij}$. Set $\tau(i, j) = t_{ij}$ for each $i$ and $j$. 

 We need to show that $\tau$ is indeed a tiling with $\Tinst$. To this end, we first note that $\Graph$ contains the triple $\triple{c_{ij}}{\hnext}{c_{i(j+1)}}$ for all $i$ and $j$: we already showed this fact for $i = 1$, and for all other $i$ it can be proved very similarly to the reasoning above, based on the fact that $\Sem{\Graph}{\Broot}$ contains a mapping sending $\sqvarone$, $\sqvartwo$, $\sqvarthree$, and $\sqvarfour$ to $c_{(i-1)j}$, $c_{(i-1)(j+1)}$, $c_{ij}$, and $c_{i(j+1)}$, respectively.
 Now, to see that $\tau$ is a tiling with $\Tinst$ we just note that if there exist horizontally or vertically adjacent tiles that do not agree with $\Hrel$ or $\Vrel$, then there exists $i$ or $j$ such that $\BnvH{i}$ or $\BnvV{j}$ is matched in $\Graph$; since this basic patterns does not have any variables in common with $\Broot$, any mapping in $\Sem{\Graph}{\Broot}$ is then extendable to 
 this BGP and hence $\Sem{\Graph}{P'_{\Tinst}}$ contains a mapping sending $\basevar$ to $\baseconstA$, contradicting the fact that graph $\Graph$ and mapping $\map$ are a witness for non-subsumption. 
 \end{proof}

\bibliographystyle{plain}
\bibliography{bibext}

\begin{thebibliography}{10}

\bibitem{RDFSyntax14}
Richard Cyganiak, David Wood, and Markus Lanthaler.
\newblock {RDF} 1.1 concepts and abstract syntax.
\newblock {W3C} recommendation, {W3C}, February 2014.
\newblock \textsf{http://www.w3.org/TR/rdf11-concepts/}.

\bibitem{GK72}
Yuri~Sh. Gurevich and I.~O. Koryakov.
\newblock Remarks on {Berger's} paper on the domino problem.
\newblock {\em Siberian Mathematical Journal}, 13(2):319--321, 1972.

\bibitem{SPARQL11}
Steve Harris and Andy Seaborne.
\newblock {SPARQL} 1.1 query language.
\newblock {W3C} recommendation, {W3C}, March 2013.
\newblock \textsf{http://www.w3.org/TR/sparql11-query/}.

\bibitem{RDFSemantics14}
Patrick~J. Hayes and Peter~F. Patel-Schneider.
\newblock {RDF} 1.1 semantics.
\newblock {W3C} recommendation, {W3C}, February 2014.
\newblock \textsf{http://www.w3.org/TR/rdf11-mt/}.

\bibitem{DBLP:conf/icdt/KaminskiK16}
Mark Kaminski and Egor~V. Kostylev.
\newblock Beyond well-designed {SPARQL}.
\newblock In Wim Martens and Thomas Zeume, editors, {\em Proc. 19th
  International Conference on Database Theory, {ICDT} 2016}, volume~48 of {\em
  LIPIcs}, pages 5:1--5:18. Schloss Dagstuhl - Leibniz-Zentrum f{\"u}r
  Informatik, 2016.

\bibitem{journals/tocs/KaminskiK18}
Mark Kaminski and Egor~V. Kostylev.
\newblock Complexity and expressive power of weakly well-designed {SPARQL}.
\newblock {\em Theory of Computing Systems (ToCS)}, 62(4):772--809, 2018.

\bibitem{LetelierPPS13}
Andr{\'{e}}s Letelier, Jorge P{\'{e}}rez, Reinhard Pichler, and Sebastian
  Skritek.
\newblock Static analysis and optimization of semantic web queries.
\newblock {\em ACM Trans. Database Syst.}, 38(4:25), 2013.

\bibitem{PerezEtAl09}
Jorge P\'erez, Marcelo Arenas, and Claudio Gutierrez.
\newblock Semantics and complexity of {SPARQL}.
\newblock {\em ACM Trans. Database Syst.}, 34(3):16:1--16:45, 2009.

\bibitem{PicalausaVansummeren11}
Fran{\c{c}}ois Picalausa and Stijn Vansummeren.
\newblock What are real {SPARQL} queries like?
\newblock In Roberto~De Virgilio, Fausto Giunchiglia, and Letizia Tanca,
  editors, {\em Proc. 3rd International Workshop on Semantic Web Information
  Management, {SWIM} 2011}, pages 7:1--7:6. {ACM}, 2011.

\bibitem{PichlerS14}
Reinhard Pichler and Sebastian Skritek.
\newblock Containment and equivalence of well-designed {SPARQL}.
\newblock In Richard Hull and Martin Grohe, editors, {\em Proc. 33rd {ACM}
  {SIGMOD-SIGACT-SIGART} Symposium on Principles of Database Systems, PODS
  2014}, pages 39--50. {ACM}, 2014.

\bibitem{SPARQL10}
Eric Prud'hommeaux and Andy Seaborne.
\newblock {SPARQL} query language for {RDF}.
\newblock {W3C} recommendation, {W3C}, January 2008.
\newblock \textsf{http://www.w3.org/TR/rdf-sparql-query/}.

\bibitem{SchmidtML10}
Michael Schmidt, Michael Meier, and Georg Lausen.
\newblock Foundations of {SPARQL} query optimization.
\newblock In Luc Segoufin, editor, {\em Proc. 13th International Conference on
  Database Theory, {ICDT} 2010}, pages 4--33. {ACM}, 2010.

\bibitem{ZhangBP16}
Xiaowang Zhang, Jan {Van den Bussche}, and Fran{\c{c}}ois Picalausa.
\newblock On the satisfiability problem for {SPARQL} patterns.
\newblock {\em J. Artif. Intell. Res. {(JAIR)}}, 56:403--428, 2016.

\end{thebibliography}

\end{document}